\documentclass{article}
\usepackage{graphicx} 
\usepackage{amsmath,amssymb,amsthm,graphicx,comment}
\usepackage[T2A]{fontenc}
\usepackage[russian, english]{babel}
\usepackage[left=2.5cm,right=2.2cm,top=2.5cm,bottom=3.5cm]{geometry}
\usepackage{forest}
\usetikzlibrary{arrows.meta}
\usepackage[upright]{fourier}
\usepackage{tkz-graph}
\usetikzlibrary{shapes.geometric, arrows}
\usetikzlibrary{automata,positioning}

\let\le\leqslant
\let\ge\geqslant

\let\geq\geqslant

\newtheorem{theorem}{Theorem}

\newtheorem*{remark*}{Remark}

\theoremstyle{definition}
\newtheorem{definition}{Definition}
\newtheorem*{hypothesis*}{Hypothesis}
\title{Nondeterministic state complexity of square root}
\author{Sergey Onishchenko\footnote{Saint Petersburg University, Faculty of Mathematics and Computer Science, 7/9 Universitetskaya nab., St. Petersburg, 199034 Russia. E-mail: dimer03@mail.ru}}
\date{April 9, 2026}

\begin{document}

\maketitle

\begin{abstract}
We investigate the nondeterministic state complexity of the square-root operation $\sqrt{L}=\{\,w \mid ww\in L\,\}$ on regular languages represented by nondeterministic finite automata. For an
$n$-state NFA accepting $L$, it was previously known that $\sqrt{L}$ can be
accepted by an NFA with at most $n^{3}$ states, while the best lower bound was only (n-1)(n-2)(n-3). In this paper, we close this gap completely
and prove that $n^{3}$ states are sufficient and necessary in the worst case.
\end{abstract}

Keywords: state complexity, language operations, nondeterministic finite automata

\section{Introduction}
The (non)deterministic state complexity of a regular language $L$ is the number of states in the
minimal (non)deterministic finite automaton recognizing $L$. The square root of a language is defined as $\sqrt{L} := \{w \in \Sigma ^*| ww \in L\}$.

We consider the following problem: NFA $A$ with n states over the alphabet $\Sigma$ recognizes the language L. How many states might be needed to recognize the language $\sqrt{L} := \{w \in \Sigma ^*| ww \in L\}$?

There has been much
interest in the study of state complexity of operations which preserve
regularity, or, in other words, in the descriptional complexity of operations
on languages represented by finite automata. Yu et al. \cite{YuZhuangSalomaa1994} established classical tight bounds for several basic DFA operations on
regular languages. Salomaa et al. \cite{SalomaaWoodYu2004} gave a detailed study of reversal. They confirmed the worst-case bound and described classes of automata and
languages attaining this maximal blow-up. Domaratzki and Okhotin \cite{DomaratzkiOkhotin2009} studied powers $L^k$ of a regular language. For fixed $k \geq 2$ and
an $n$-state DFA for $L$, they proved the general asymptotic bound. Also Cui et al. \cite{CuiGaoKariYu2012} completed the study of combined operations with two basic operations
among union, intersection, catenation, star, and reversal.

The problem of the state complexity of the square-root operation in the deterministic case was first raised by Maslov \cite{M}. He proved the upper bound $n^n$. Then Caron et al. \cite{CCLP} proved the exact bound, that turned out to be a little smaller---namely $n^n-{n \choose 2}$.

When the operands are given by
nondeterministic finite automata (NFAs), the corresponding questions often
become substantially more delicate, which was represented, for example, by Holzer and Kutrib \cite{HK}. Nondeterminism may yield exponential
savings in representation size, but the effect of language operations on this
succinctness has not been fully investigated yet. In this sense, the
operational complexity of NFAs remains one of the central themes of
descriptional complexity. Jir{\'a}skov{\'a} and Okhotin \cite{JiraskovaOkhotin2008} studied the cyclic shift operation. For NFAs, they established the tight bound for the nondeterministic state
complexity of cyclic shift. Go{\v{c}} et al. \cite{GocPalioudakisSalomaa2014} studied nondeterministic state complexity of proportional removals. They considered polynomial removals and proved an upper bound
for languages represented by $n$-state NFAs. Matching lower bounds were obtained
for important classes of removal functions.
Han et al. \cite{HanSalomaaWood2009} discusses the operations for prefix-free regular languages. The operations considered
include catenation, union, intersection, Kleene star, reversal, and
complementation. The results show how prefix-freeness changes the usual NFA
state-complexity bounds for general regular languages.
Hospod{\'a}r et al. \cite{HospodarJirasekJiraskovaSebej2025} studied NFA-to-DFA trade-offs for regular operations: the operands are
given by NFAs, while the result must be represented by a DFA. They obtained tight
bounds for many operations, including complementation, union, symmetric
difference, intersection, difference, quotients, reversal, star, concatenation,
cut, shuffle, and square.

It is well known that for any language $L$, which is recognized by an NFA with $n$ states, language $\sqrt{L}$ is recognized by an NFA with $n^3$ states---this exercise is often found in university courses on automata theory.

The best lower bound was proved by Okhotin and Salomaa \cite{OS} and equal to $(n-1)(n-2)(n-3)$.
\\
In this paper, we are going to prove the exact lower bound $n^3$: for every $n$ we provide an example of NFA with $n$ states, which recognizes language $L$, such that $\sqrt{L}$ is not recognized by any NFA with fewer than $n^3$ states.

\section{Preliminaries}

Let $\Sigma$ be a finite alphabet. We write $\Sigma^*$ for the free monoid generated by $\Sigma$, and $\varepsilon$ for the empty word.

A \emph{nondeterministic finite automaton} (NFA) is a quintuple
\[
A=(Q,\Sigma,\delta,Q_0,F),
\]
where $Q$ is a finite set of states, $\Sigma$ is the input alphabet, $Q_0 \subset Q$ is the set of initial states, $F \subseteq Q$ is the set of accepting states, and
\[
\delta \colon Q \times \Sigma \to 2^Q
\]
is the transition function.

The transition function is extended in the standard way to a mapping
\[
\delta \colon Q \times \Sigma^* \to 2^Q
\]
by setting
\[
\delta(q,\varepsilon)=\{q\}
\]
for every $q \in Q$, and
\[
\delta(q,wa)=\bigcup_{p \in \delta(q,w)} \delta(p,a)
\]
for every $q \in Q$, $w \in \Sigma^*$, and $a \in \Sigma$.

The language recognized by $A$ is
\[
L(A)=\{\, w \in \Sigma^* \mid \delta(q_0,w)\cap F \neq \emptyset \,\}.
\]

We shall use the fooling set technique to derive lower bounds on the nondeterministic state complexity.

\begin{definition}
Let $L \subseteq \Sigma^*$. A set
\[
S=\{(x_i,y_i)\mid 1 \le i \le m\}\subseteq \Sigma^* \times \Sigma^*
\]
is called a \emph{fooling set} for $L$ if the following conditions are satisfied:
\begin{enumerate}
    \item $x_i y_i \in L$ for every $1 \le i \le m$;
    \item for every pair of distinct indices $i,j$, at least one of the words $x_i y_j$ and $x_j y_i$ does not belong to $L$.
\end{enumerate}
\end{definition}

The following theorem was proved by Jean-Camille Birget \cite{B}.

\begin{theorem}
Let $L \subseteq \Sigma^*$. If $L$ admits a fooling set of cardinality $m$, then every NFA recognizing $L$ has at least $m$ states.
\end{theorem}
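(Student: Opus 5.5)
The plan is the classical cut-and-paste argument of Birget. Let $S=\{(x_i,y_i)\mid 1\le i\le m\}$ be a fooling set for $L$, and let $A=(Q,\Sigma,\delta,Q_0,F)$ be an arbitrary NFA with $L(A)=L$. We show that $|Q|\ge m$ by building an injective map $i\mapsto q_i$ from $\{1,\dots,m\}$ into $Q$.

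\textbf{Choosing the states.} Since $x_iy_i\in L$ by condition~1 of the definition of a fooling set, $A$ has an accepting computation on $x_iy_i$. We fix one such computation and let $q_i$ be the state it occupies after reading the prefix $x_i$. Thus $q_i\in\delta(q,x_i)$ for some $q\in Q_0$, and $\delta(q_i,y_i)\cap F\neq\emptyset$. For the extended transition function this decomposition follows from the identity $\delta(q,uv)=\bigcup_{p\in\delta(q,u)}\delta(p,v)$, which comes from the recursive definition by a routine induction on $|v|$.

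\textbf{Injectivity.} Suppose that $q_i=q_j$ for some $i\neq j$. Concatenating the computation from an initial state to $q_i$ on $x_i$ with the computation from $q_j=q_i$ to an accepting state on $y_j$ gives an accepting computation on $x_iy_j$, so $x_iy_j\in L$. Symmetrically, joining the computation on $x_j$ that ends in $q_j=q_i$ with the computation on $y_i$ from $q_i$ shows $x_jy_i\in L$. Both words then lie in $L$, which contradicts condition~2 for the pair $i,j$. Hence the states $q_1,\dots,q_m$ are pairwise distinct, and $|Q|\ge m$.

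\textbf{Where care is needed.} The argument has no real obstacle. The only points to watch are the splitting and gluing of runs through the extended transition function, and the fact that the definition allows a set of initial states $Q_0$ rather than a single $q_0$. The latter is harmless, since each run simply starts in some element of $Q_0$.
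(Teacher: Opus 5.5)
Your proof is correct. The paper gives no proof of this theorem and only attributes it to Birget, and your argument is exactly that standard cut-and-paste proof: for each $i$ you fix the state $q_i$ reached after $x_i$ on an accepting run for $x_iy_i$, and you show that $i\mapsto q_i$ is injective because $q_i=q_j$ would put both $x_iy_j$ and $x_jy_i$ in $L$.
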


\newpage


\section{The upper bound}
For the sake of completeness, here we will present that well-known construction. The idea of our lower-bound construction will be based on it.

For every NFA $A = (Q,\Sigma,\delta,Q_0,F)$ with $n$ states, which recognizes language $L$, we will provide a NFA $B$ with $n^3$ states, which recognizes language $\sqrt L$.

\begin{definition}\label{main}
Let us define an NFA $B=(\Sigma, Q', Q'_0, \delta', F')$'.

\[
Q' = Q \times Q \times Q.
\]

Obviously, $|Q'| = n^3$.

\[
Q'_0=\{(p,q_0,p)\mid p\in Q, q_0 \in Q_0\}.
\]

\[
F'=\{(p,p,f)\mid p\in Q,\ f\in F\}.
\]

For every \(a\in \Sigma\), define the transition relation by
\[
(p,q,r)\xrightarrow{a}_{\delta'}(p,q',r')
\]
if and only if
\[
q'\in \delta(q,a)
\qquad\text{and}\qquad
r'\in \delta(r,a).
\]
\end{definition}

\begin{theorem}
NFA $B$ recognizes $\sqrt L$.
\end{theorem}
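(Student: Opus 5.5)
The plan is to prove the two inclusions $L(B)\subseteq\sqrt L$ and $\sqrt L\subseteq L(B)$ separately. Both will rest on a single invariant that describes exactly which states of $B$ are reachable on a given word. The intended meaning of a state $(p,q,r)$ is as follows:
\begin{itemize}
\item the first component $p$ is a ``guess'' for the state that $A$ occupies after reading the first copy of $w$; it is fixed once and never changes;
\item the second component $q$ simulates $A$ on the first copy of $w$, starting from an initial state;
\item the third component $r$ simulates $A$ on the second copy of $w$, starting from the guessed state $p$.
\end{itemize}

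First I will establish the key lemma by induction on $|w|$: for every $w\in\Sigma^*$, the state $(p,q,r)$ is reachable in $B$ from some initial state by reading $w$ if and only if there is $q_0\in Q_0$ with $q\in\delta(q_0,w)$ and $r\in\delta(p,w)$.
\begin{itemize}
\item The base case $w=\varepsilon$ is exactly the definition of $Q'_0$. The reachable states are $(p,q_0,p)$, and indeed $\delta(q_0,\varepsilon)=\{q_0\}$ and $\delta(p,\varepsilon)=\{p\}$.
\item For the inductive step $w=ua$, I will unfold the definition of $\delta'$. Its first component is constant, and its second and third components move independently according to $\delta$. The extended transition function satisfies $\delta(q_0,ua)=\bigcup_{s\in\delta(q_0,u)}\delta(s,a)$, and similarly for $\delta(p,ua)$. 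This lets the two directions of the equivalence be matched step by step.
\end{itemize}
The only care needed is that the same $p$ is used throughout the run. This is immediate because no transition of $B$ changes the first coordinate.

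Next come the two inclusions.
\begin{itemize}
\item Suppose $w\in L(B)$. Then some reachable state on $w$ lies in $F'$, so it has the form $(p,p,f)$ with $f\in F$. By the lemma there is $q_0\in Q_0$ with $p\in\delta(q_0,w)$ and $f\in\delta(p,w)$. By the standard composition property $\delta(q_0,ww)=\bigcup_{s\in\delta(q_0,w)}\delta(s,w)$ (itself a routine induction on the second word), we get $f\in\delta(q_0,ww)$. Hence $ww\in L$, that is, $w\in\sqrt L$.
\item Conversely, suppose $ww\in L$. Then there is an accepting computation of $A$ on $ww$ from some $q_0\in Q_0$ ending in some $f\in F$. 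Let $p$ be the state of this computation after the first copy of $w$, so $p\in\delta(q_0,w)$ and $f\in\delta(p,w)$. The lemma then yields that $(p,p,f)\in F'$ is reachable on $w$, so $w\in L(B)$.
\end{itemize}

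There is a small notational point. The paper's definition of $L(A)$ is written with a single $q_0$, while $Q_0$ is a set. I will read acceptance as ``there exists $q_0\in Q_0$'' throughout.

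No step is a genuine obstacle. The only place requiring attention is the inductive lemma, specifically stating it as an exact characterization (an ``iff'') rather than one direction. With that formulation both inclusions follow immediately, with the guessed middle state $p$ serving as the witness in each direction.
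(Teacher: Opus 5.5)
Your proof is correct. The reachability invariant, namely that the states reachable on $w$ from $(p,q_0,p)$ are exactly $\{p\}\times\delta(q_0,w)\times\delta(p,w)$, together with the composition identity $\delta(q_0,ww)=\bigcup_{s\in\delta(q_0,w)}\delta(s,w)$, gives both inclusions, including the case $w=\varepsilon$. The paper does not prove this theorem (it calls the argument well known and omits it), and your argument is the standard verification of that construction, with the constant first coordinate acting as the guessed midpoint state.
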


The proof of this theorem is widely known and is not particularly difficult, so we will not present it here.

\section{The lower bound}
This is our main result. For every $n\ge 6$ we provide an example of NFA $A_n$ with $n$ states, which recognizes language $L$, such that $\sqrt{L}$ is not recognized by any NFA with fewer than $n^3$ states.

\begin{definition}\label{main}
\begin{center}
Let us define an NFA $A_n=(\Sigma, Q, Q_0, \delta, F)$ for any $n\ge 6$.
\end{center}
\[
Q=\{0,1,2,3,4,5,\ldots,n-1\}.
\]

\[
Q_0=\{0,1,2\},\qquad F=\{3,4,5\}.
\]

\[
\Sigma=\{a_X,b_X|X=(p,q,r) \in Q^3\}.
\]

\[
\delta(a_X,l)=q, \qquad \delta(a_X,p)=r, \qquad \delta(b_X,q)=p,\qquad \delta(b_X,r)=m,
\]
\[
l=
\begin{cases}
0, & p\notin\{0,1\},\\
1, & p=0,\\
2, & p=1,
\end{cases}
\qquad
m=
\begin{cases}
3, & p\notin\{3,4\},\\
4, & p=3,\\
5, & p=4.
\end{cases}
\]
\\
\begin{center}
Let $L$ be the language recognized by $A$.
\end{center}
\end{definition}

\begin{theorem}\label{main}
$\sqrt{L}$ is not recognized by any NFA with fewer than $n^3$ states.
\end{theorem}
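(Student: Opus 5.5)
The plan is to apply Birget's fooling-set theorem stated above to $\sqrt{L}$, using a fooling set of size exactly $n^3$, one pair for each triple $X=(p,q,r)\in Q^3$. The construction of $A_n$ is modelled on the upper-bound automaton $B$, where a state is a triple (guessed midpoint, current first-half state, current second-half state). I therefore take the pairs $(x_X,y_X)=(a_X,b_X)$, so that the tested words are $w=a_Xb_Y$ and we need to decide when $ww=a_Xb_Ya_Xb_Y\in L$.

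First, membership. The word $(a_Xb_X)^2$ is accepted along
\[
l\xrightarrow{a_X}q\xrightarrow{b_X}p\xrightarrow{a_X}r\xrightarrow{b_X}m,
\]
where $l\in Q_0$ and $m\in F$ by definition. So $x_Xy_X\in\sqrt{L}$ for every $X$. The definitions of $l$ and $m$ are chosen so that $l\neq p$ always, since $l$ is forced to differ from $p$ when $p\in\{0,1\}$, and similarly $m\neq p$ always. This will matter when ruling out degenerate paths.

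Second, the fooling condition, which is the core of the proof. Fix $X=(p,q,r)$ and $Y=(p',q',r')$, and suppose $(a_Xb_Ya_Xb_Y)$ is accepted along a path $s_0s_1s_2s_3s_4$ with $s_0\in Q_0$ and $s_4\in F$. Each $a_X$-step uses either the transition $l\to q$ or the transition $p\to r$. Each $b_Y$-step uses either $q'\to p'$ or $r'\to m'$. I would enumerate the at most $16$ combinations of transition types and extract equalities from each one. In the intended combination ($l\to q$, $q'\to p'$, $p\to r$, $r'\to m'$) we get $q=q'$, $p'=p$, $r=r'$, and hence $X=Y$. The other combinations must either be impossible or force additional equalities. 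A typical example is the path that uses $p\to r$ at step 1 (so $s_0=p\in Q_0$) and $r'\to m'$ at step 2; then $s_2=m'\in\{3,4,5\}$, and step 3 needs an $a_X$-transition out of $s_2$. Each such case yields a chain of equalities among the coordinates of $X$, $Y$, $l$, $m'$. The goal is to show that acceptance of both $(a_Xb_Y)^2$ and $(a_Yb_X)^2$ forces $X=Y$; the symmetric word contributes the reverse inequalities or equalities.

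I expect the main obstacle to be exactly this case analysis of non-intended paths. This is where the tuned choice of $l\in\{0,1,2\}$ and $m\in\{3,4,5\}$ (rather than single initial and final states), and the restriction $n\ge 6$, must be used. My approach is to organise the cases by which $b_Y$-transition is used at the last step. If it is $q'\to p'$, then $p'\in F$ and $s_3=q'$. If it is $r'\to m'$, then $s_3=r'$. In either case I work backwards, using $l\neq p$ and $m\neq p$ to kill most branches. If a residual branch survives for some specific pair $X\neq Y$, the fallback is to use the other word $(a_Yb_X)^2$ for that pair, or else to modify the fooling pair (for instance $(a_X,b_X)$ versus $(a_Xb_X,\,a_Xb_X)$-type splits). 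I would aim to show that every surviving degenerate path for $(a_Xb_Y)^2$ is incompatible with every surviving degenerate path for $(a_Yb_X)^2$. Birget's theorem then gives the bound of $n^3$ states.
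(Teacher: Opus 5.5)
You have the same framework as the paper: the fooling set $\{(a_X,b_X)\}$, the run $l\to q\to p\to r\to m$ showing $(a_Xb_X)^2\in L$, and an enumeration of the $2^4$ transition types. But the proposal stops exactly where the proof starts. The fooling condition is never verified, and you keep open the option of replacing the fooling set. More seriously, the only structural facts you isolate, $l\neq p$ and $m\neq p$, are not enough. Suppose $l$ were merely fixed-point-free, say $l(0)=1$ and $l(1)=0$. Take $X=(0,c,c)$ and $Y=(1,c,c)$. Then $(a_Xb_Y)^2$ would be accepted along $1\xrightarrow{a_X}c\xrightarrow{b_Y}1\xrightarrow{a_X}c\xrightarrow{b_Y}m(1)$, and $(a_Yb_X)^2$ along $0\xrightarrow{a_Y}c\xrightarrow{b_X}0\xrightarrow{a_Y}c\xrightarrow{b_X}m(0)$. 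So this pair would not be fooled.

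Here is the missing argument. Write $X=(p,q,r)$, $Y=(p',q',r')$, $l=l(p)$ and $m'=m(p')$.
\begin{itemize}
\item Nine of the sixteen combinations die because they need a state in $Q_0\cap F=\emptyset$.
\item The diagonal run forces $X=Y$.
\item Two surviving runs have $p=p'$. The run $(p\to r,\,q'\to p',\,p\to r,\,r'\to m')$ needs $p\in Q_0$ and $r=r'=q'$. The run $(l\to q,\,q'\to p',\,p\to r,\,q'\to p')$ needs $p\in F$ and $r=q=q'$.
\item The remaining four runs force $p'=l(p)\in Q_0$ or $p=m(p')\in F$, hence $p\neq p'$. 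This is where $l\neq p$ and $m\neq p$ enter.
\end{itemize}
If $p=p'$, both cross words must use the same one of the two $p=p'$ runs, and the symmetric equalities give $X=Y$. If $p\neq p'$, disjointness of $Q_0$ and $F$ leaves only two possibilities: $p'=l(p)$ and $p=l(p')$, or $p=m(p')$ and $p'=m(p)$. Both are excluded because $l$ and $m$ have no $2$-cycles. On $Q_0$, $l$ is the $3$-cycle $0\to1\to2\to0$ and it sends everything else to $0$; likewise $m$ on $F$. This is the actual reason for three initial and three final states and for $n\ge 6$, and without it your argument does not close.
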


\begin{proof}[Proof of Theorem 3.]
$S:=\{(a_X,b_X)|X=(p,q,r) \in Q^3\}$.
$|S|=n^3$, so it is enough to prove that S - fooling set for $\sqrt{L}$.

$a_{X_1}b_{X_2} \in \sqrt{L} \iff a_{X_1}b_{X_2}a_{X_1}b_{X_2} \in L$. Suppose the A accepts $a_{X_1}b_{X_2}a_{X_1}b_{X_2}$, when started in some state $i \in \{0, 1, 2\}$. We represent as a tree all runs that can be accepting. The edges are annotated with the conditions under which a run could have traversed that edge.

\tikzset{Bullet/.style={fill = black, draw, color=#1, outer sep = 2, circle, minimum size = 3pt, scale = 0.75}}
    \begin{tikzpicture}
    \node[Bullet=black,label=left : {$i$}] (i) at (0, 6){};
    \node[Bullet=black,label=left : {$r_1$}] (r_1) at (-4, 4){};
    \node[Bullet=black,label=left : {$q_1$}] (q_1) at (3, 4){};
    \node[Bullet=black,label=left : {$m_2$}] (m_2) at (-6, 2){};
    \node[Bullet=black,label=left : {$p_2$}] (p_2) at (-3, 2){};
    \node[Bullet=black,label=left : {$p_2$}] (p2) at (1, 2){};
    \node[Bullet=black,label=left : {$m_2$}] (m2) at (5, 2){};
    \node[Bullet=black,label=left : {$r_1$}] (r1) at (-5, 0){};
    \node[Bullet=black,label=left : {$q_1$}] (q1) at (-3, 0){};
    \node[Bullet=black,label=left : {$r_1$}] (r^1) at (0, 0){};
    \node[Bullet=black,label=left : {$q_1$}] (q^1) at (3, 0){};
    \node[Bullet=black,label=left : {$r_1$}] (r@1) at (7, 0){};
    \node[Bullet=black,label=left : {(1) $m_2$}] (m^2) at (-6, -2){};
    \node[Bullet=black,label=left : {(2) $m_2$}] (m@2) at (-4, -2){};
    \node[Bullet=black,label=left : {(3) $m_2$}] (m-2) at (-2, -2){};
    \node[Bullet=black,label=left : {(4) $p_2$}] (p^2) at (0, -2){};
    \node[Bullet=black,label=left : {(5) $m_2$}] (m*2) at (4, -2){};
    \node[Bullet=black,label=left : {(6) $m_2$}] (m+2) at (6, -2){};
    \node[Bullet=black,label=left : {(7) $p_2$}] (p@2) at (8, -2){};

    \draw (i)--node[left]{$p_1=i$}(r_1);
    \draw (r_1)--node[left]{$r_1=r_2$}(m_2);
    \draw (r_1)--node[right]{$r_1=q_2$}(p_2);
    \draw (p_2)--node[left]{$p_1=p_2$}(r1);
    \draw (p_2)--node[right]{$l_1=p_2$}(q1);
    \draw (r1)--node[left]{$r_1=r_2$}(m^2);
    \draw (q1)--node[left]{$q_1=r_2$}(m@2);
    \draw (i)--node[right]{$l_1=i$}(q_1);
    \draw (q_1)--node[left]{$q_1=q_2$}(p2);
    \draw (q_1)--node[right]{$q_1=r_2$}(m2);
    \draw (p2)--node[left]{$p_1=p_2$}(r^1);
    \draw (p2)--node[right]{$l_1=p_2$}(q^1);
    \draw (m2)--node[right]{$p_1=m_2$}(r@1);
    \draw (r^1)--node[left]{$r_1=r_2$}(m-2);
    \draw (r^1)--node[right]{$r_1=q_2$}(p^2);
    \draw (q^1)--node[left]{$r_2=q_1$}(m*2);
    \draw (r@1)--node[left]{$r_2=r_1$}(m+2);
    \draw (r@1)--node[right]{$r_1=q_2$}(p@2);
\end{tikzpicture}
\\
So, $a_{X_1}b_{X_2} \in \sqrt{L}$ in seven cases:
\\
1) $p_1=p_2=i, r_1=r_2=q_2$
\\
2) $p_1=i, p_2=l_1, r_1=q_2, q_1=r_2$
\\
3) $p_1=p_2, q_1=q_2, r_1=r_2$
\\
4) $p_1=p_2 \in F, r_1=q_1=q_2$
\\
5) $p_2=l_1, q_1=q_2=r_2$
\\
6) $p_1=m_2, q_1=r_1=r_2$
\\
7) $p_1=m_2, r_1=q_2, q_1=r_2, p_2 \in F$
\\
Therefore, for any $X \in Q^3$ the string $a_Xb_Xa_Xb_X$ is accept by $A$---case 3 represents it.
\\
Suppose that $S$ is not a fooling set, i.e., $\exists X_3 \neq X_4: a_{X_3}b_{X_4},a_{X_4}b_{X_3} \in \sqrt{L}$. Then, each of these two strings satisfies one of the cases 1-2 or 4-7.
\\
Suppose that $p_3=p_4$ - this allows us to consider only cases 1 and 4, because, if case 2 or 5 is satisfied, $p_2=l_1 \neq p_1$, and, if case 6 or 7 is satisfied, $p_1=m_2 \neq p_2$.
\\
Let $p_3=p_4 \in Q_0$, then both for $a_{X_3}b_{X_4}$ and for $a_{X_4}b_{X_3}$ case 1 occurs, because, if case 4 is satisfied, $p_1=p_2 \in F$ and $F \cap Q_0 = \emptyset$.

If case 1 satisfied, $r_1=r_2=q_2$. Case 1 satisfied for $a_{X_3}b_{X_4}$, therefore $r_3=r_4=q_4$. Case 1 satisfied for $a_{X_4}b_{X_3}$, therefore $r_4=r_3=q_3$. So, $r_3=r_4=q_3=q_4 \Rightarrow X_3=X_4$.
\\
Now let $p_3=p_4 \notin Q_0$, then both for $a_{X_3}b_{X_4}$ and for $a_{X_4}b_{X_3}$ case 4 occurs, because, if case 1 is satisfied, $p_1=p_2 \in Q_0$.

If case 4 satisfied, $r_1=q_1=q_2$. Case 4 satisfied for $a_{X_3}b_{X_4}$, therefore $r_3=q_3=q_4$. Case 4 satisfied for $a_{X_4}b_{X_3}$, therefore $r_4=q_4=q_3$. So, $r_3=r_4=q_3=q_4 \Rightarrow X_3=X_4$.
\\
Suppose that $p_3 \neq p_4$---this allows us to consider only cases 2, 5, 6 and 7, because, if case 1 or 4 is satisfied, $p_1=p_2$.
\\
Let string $a_{X_3}b_{X_4}$ satisfy one of the cases 2 and 5. In these cases $p_2=l_1 \in Q_0$, therefore $p_4 \in Q_0$. In cases 6 and 7 we have $p_1=m_2 \notin Q_0$ (because $m_2 \in F$, $F \cap Q_0 = \emptyset$), so string $a_{X_4}b_{X_3}$ also satisfy one of the cases 2 and 5 (since we know, that $p_4 \in Q_0$). So, strings $a_{X_3}b_{X_4}$ and $a_{X_4}b_{X_3}$ both correspond to the pair of cases 2 and 5. In these cases $p_2=l_1$. Therefore, $p_3=l_4$ and $p_4=l_3$. But conditions $p_3=l_4$ and $p_4=l_3$ are inconsistent by definition of $l$.

Let string $a_{X_3}b_{X_4}$ satisfy one of the cases 6 and 7. In these cases $p_1=m_2 \in F$, therefore $p_3 \in F$. In cases 2 and 5 we have $p_2=l_1 \notin F$ (because $l_1 \in Q_0$, $F \cap Q_0 = \emptyset$), so string $a_{X_4}b_{X_3}$ also satisfy one of the cases 6 and 7 (since we know, that $p_3 \in F$). So, strings $a_{X_3}b_{X_4}$ and $a_{X_4}b_{X_3}$ both correspond to the pair of cases 6 and 7. In these cases $p_1=m_2$. Therefore, $p_3=m_4$ and $p_4=m_3$. But conditions $p_3=m_4$ and $p_4=m_3$ are inconsistent by definition of $m$.
\\
Contradiction.
\end{proof}
\section{Acknowledgement}
The work of S. Onishchenko was performed at the Saint Petersburg Leonhard Euler International Mathematical Institute and supported by the Ministry of Science and Higher Education of the Russian Federation (agreement no. 075–15–2025–343).


\end{document}